\newcommand{\turkoglu}{T\"{u}rko\u{g}lu}
\newcommand{\call}[1]{\textsc{#1}}
\newcommand{\Real}{\mathbb{R}}
\newcommand{\xor}{\mathop{\textbf{xor}}}
\newcommand{\NN}{\mathrm{NN}}
\newcommand{\polylog}{\mathrm{polylog}}
\newtheorem{definition}{Definition}[section]
\newtheorem{lemma}[definition]{Lemma}
\newtheorem{theorem}[definition]{Theorem}
\def\EndProof{ \quad \vrule width 1ex height 1ex depth 0pt }
\newenvironment{proof}{\textbf{Proof:}\hspace{4pt}}{\EndProof}
\newenvironment{icompact}{
  \begin{list}{$\bullet$ }{
    \leftmargin .25in 
    \setlength{\itemindent}{-\labelwidth-\labelsep-2pt}
    \parsep 0pt
    \partopsep 0pt
    \topsep 2pt plus 2pt minus 2pt
    \itemsep 0pt}}%
  {\end{list}}
\title{Succinct Representation of Well-Spaced Point Clouds}
\author{
Beno\^{\i}t Hudson \\
Toyota Technological Institute at Chicago
}
\date{}
\begin{document}

\maketitle

\begin{abstract}
A set of $n$ points in low dimensions takes $\Theta(nw)$ bits to store
on a $w$-bit machine.  Surface reconstruction and mesh refinement
impose a requirement on the distribution of the points they process.
I show how to use this assumption to lossily compress a set of $n$
input points into a representation that takes only $O(n)$ bits,
independent of the word size.  The loss can keep inter-point distances
to within 10\% relative error while still achieving a factor of three
space savings.  The representation allows standard quadtree
operations, along with computing the restricted Voronoi cell of a
point, in time $O(w^2 + \log n)$, which can be improved to time
$O(\log n)$ if $w \in \Theta(\log n)$.  Thus one can use this
compressed representation to perform mesh refinement or surface
reconstruction in $O(n)$ bits with only a logarithmic slowdown.

\end{abstract}

\section{Introduction}

My goal in this paper is to produce a succinct point location structure to
support surface reconstruction and mesh refinement tasks while using
asymptotically less space than the current state of the art.  To this end, I
present a compressed data structure that provides an interface much like that
of a $2^d$-tree (for readability I call this a quadtree).  Funke and
Milosavljevic~\cite{funke07network} show
how to use an quadtree so that given a sufficiently dense set of points in
3-space that lie on an unknown 2-manifold, we can reconstruct a triangulation
that approximates the 2-manifold, in time $O(n \log n)$.  Hudson and
\turkoglu{}~\cite{hudson08efficient} show how to use a quadtree
to augment an arbitrary set of points in $d$ dimensions and generate
a well-spaced point set, that is, one whose Delaunay triangulation will produce
a good mesh for scientific applications, usually in $O(n \log n)$ time (more
precisely, in $O(n \log \Delta)$ time where $\Delta$ is the geometric spread).
As a special case, if the point set is well-spaced, they compute the Voronoi
cell of an arbitrary point in constant time.

The storage used by the two algorithms is in two parts: the quadtree, and the
point coordinates.  A balanced quadtree~\cite{bern94provably} over $n$ arbitrary
points can have a superlinear number of quadtree cells.  However, the
requirements of the tasks at
hands impose limits on the size of the quadtree.  In the surface reconstruction
case, the $n$ points of the input must be \emph{locally regular} (which has
various definitions,
see Section~\ref{sec:prelims}).  In the mesh refinement case, the output
must be well-spaced, and I use $n$ to refer to the number of points in the
output.  These requirements imply that the quadtree only has
$O(n)$ leaves and internal nodes~\cite{hudson09size}.  Thus, in a cell probe
model of computation with real arithmetic, the storage is linear.  Sadly, this
is a completely fictional machine.  In the more realistic word RAM model,
each pointer must be at least $\log(n)$ bits long for each point to have a
distinct memory address, whereas each coordinate must be
at least $\log(n^{1/d})$ bits for each point to have a distinct location in
space.  For simplicity I assume both coordinates and pointers have the same
length, namely $w$ bits.

The first part of this paper
(Sections~\ref{sec:prelims}--\ref{sec:morton}) lays the groundwork,
showing that we can use a small interface on top of a quadtree in order to
support the richer operations that Funke \emph{et al} or Hudson \emph{et al}
require.  I show how to implicitly represent the balanced quadtree by sorting
the coordinates in Morton order~\cite{morton66computer} (also known as the
Z-order), a trick that extends the so-called \emph{framework} of Bern,
Eppstein, and Teng~\cite{bern99parallel}.  This all but eliminates the
storage cost of the quadtree, while only requiring a logarithmic slowdown.

The second and more interesting part of the paper
(Sections~\ref{sec:diffcode}--\ref{sec:fast}) deals with compressing the
\emph{geometric}
storage requirements at asymptotically no greater runtime cost.  The intuition
is that, given that we have the points in sorted order, the distance between
one point and the next is generally small, therefore the difference in
coordinates is small.  What my structure stores for each vertex is the
coordinates of each point, rounded to be at a fixed position in its quadtree
leaf square: the representation is lossy, as it must be
(Lemma~\ref{lem:rounding-needed}).
It keeps track of the
number of bits that were rounded off by storing the quadtree leaf size.
Intuitively, the leaf sizes generally do not change much from one point to the
next, so it only stores the difference in the leaf sizes.  Intuitively, only the
low bits of the coordinates of each point change from one point to the next in
the sorted order, so it only stores the XOR of the coordinates.  To prove these
intuitions I show a correspondence between the storage requirements and a
depth-first traversal of the balanced quadtree: since we can traverse an
$n$-node tree by visiting each node only once, we can store the $n$ geometric
points in $O(n)$ bits.  I discuss very preliminary experimental results
in the conclusions.

Traditional compression techniques read in the entire input into memory and
compress it, the intent being to then transmit the file over the network, or to
save on disk storage.  But the entire point of my work here is to handle inputs
that are too big for main memory, without switching to an out-of-core or
streaming framework.  I show instead how to read points in from a file, without
exceeding the space bounds, using $w$ scans through the file for a total of
about $O(n \log^2 n)$ time to initialize the structure assuming words of length
about $\log n$.

\paragraph{Prior work:}
Traditional work on mesh compression store both the mesh topology and the
mesh geometry using a 10--20 bits per vertex, most of it geometric
information~\cite[survey]{peng05technologies}.  Unfortunately, this is
inapplicable since in the setting of this paper, we do not have
connectivity information: in the reconstruction case, connectivity is the
final goal, and in the refinement case, connectivity is too expensive to
maintain.

Another related approach is a succinct mesh structure by Blandford \emph{et
al}~\cite{blandford05compact, blandford05thesis}, which can be used to
compute the Delaunay triangulation of a set of points.  They compress the
connectivity as they discover it, but do not compress the geometry.  Thus
our approaches are complementary: I show how to compress the geometry, but
not the connectivity.  It remains future work to see how precisely to meld
the two ideas.

\section{Preliminaries}
\label{sec:prelims}

\paragraph{Machine Model:}
I assume we operate on the $w$-bit word RAM.  That is, pointers are $w$-bit
memory addresses, and coordinates consume $w$ bits also.  For simplicity, I
assume the points are at positive fixed-point (\emph{i.e.}\ integer)
coordinates; it will be clear that the sign bit and exponent of floating point
notation can easily be handled.  Coordinates thus go from 0 to $W \equiv 2^w -
1$.

\paragraph{Input Assumption:}
Given a point set $P$ and subspace $S$, the \emph{restricted Voronoi
diagram} assigns to each $p \in P$ a cell $V_p$ consisting of the set of points
$x \in S$ that are closer to $p$ than to any other point $q \in P$.  In other
words, it is the intersection of the Voronoi diagram of $P$ and the subspace
$S$.  The \emph{restricted Delaunay neighbours} of $p \in P$ are the points
of $Q$ whose restricted Voronoi cells intersect $V_p$.
The results herein assume the restricted Voronoi cells are well-spaced in the
following sense:

\begin{definition}
The \textbf{aspect ratio} of $V_p$ (and, by extension, of $p$) is the ratio
$\frac{\max_{x\in V_p} |px|}{\min_{q \in P} |pq|}$.  We say that $P$
is \textbf{$\rho$-well-spaced} if for all $p \in P$, the aspect ratio of $V_p$
is at most $\rho$.
\end{definition}

In the mesh refinement problem, one of the goals is precisely to generate a
well-spaced set of points.  In surface reconstruction, there are many
assumptions that can be required of the input.  In particular, many algorithms
are only guaranteed to work if the input is an $\epsilon$-net in the following
sense:

\begin{definition}
Let $f : S \to
\Real$ be a Lipschitz function; that is, $f(x) \leq f(y) + |xy|$ for any $x$
and $y$ both in $S$.  Then the $n$-point set $P \subset S$ is an
\textbf{$\epsilon$-net} over $S$ if:
\begin{icompact}
\item for all $x \in S$, there is a point $p \in P$ such that $|px| \leq
	\epsilon f(x)$
\item for all $p \in P$, for all $q \in P \backslash \{p\}$, $|pq| \geq f(p)$.
\end{icompact}
\end{definition}

An $\epsilon$-net is well-spaced.  Consider a restricted Voronoi cell $V_p$.
The denominator in the aspect ratio is at least $f(p)$ by definition.  For the
numerator, consider a point $x \in V_p$.  We know that $|px| \leq \epsilon
f(x)$.  By the Lipschitz assumption on $f$, we know $f(x) \leq f(p) + |px|$.
Thus $|px| \leq f(p) / (1-\epsilon)$.  In other words, the aspect ratio is at
most $\epsilon / (1 - \epsilon)$.

\section{Quadtree Query Structure}
\label{sec:quadtree}

The query structure provides an interface that the balanced
quadtree~\cite{bern94provably} can support.  See Figure~\ref{fig:api}.

\begin{figure}
\centering
\begin{tabular}{rl}
$\call{SquareOf}(p)$	& Return the leaf that contains the point $p$ \\
$\call{Vertices}(s)$	& Return a pointer to the set of vertices in the square $s$ \\
$\call{Neighbour}(s, i)$& Return the $i$th equal-sized neighbour of the square $s$ \\
$\call{Child}(s, i)$	& Return the $i$th child of the square $s$ \\
$\call{Voronoi}(p)$	& Compute the restricted Voronoi cell of $p$ \\
\end{tabular}
\caption{
\label{fig:api}
Interface I support for the quadtree.  
\call{Voronoi} is implemented in terms of the first four queries.  All run in constant time for a pointer-based quadtree.}
\end{figure}

In my parlance, a quadtree is a tree structure defined recursively: each node,
which I call a \emph{square}, has a size and a defining point, namely its
minimum corner.  Internal squares have exactly $2^d$ children.  Leaf squares
store
a list of the points they contain.  We can recursively define the
\emph{balanced quadtree} of Bern, Eppstein, and Gilbert~\cite{bern94provably}:
we start with the root square, with minimum corner at the origin and size
$W$.  We split the largest quadtree leaf square that is either crowded or
unbalanced.  A square is crowded if either it contains two or more
vertices, or it contains exactly one vertex, and a neighbouring square contains
one or more vertices.  A square is unbalanced if there is a neighbouring
square of one quarter the size.  Splits are in the middle of the node: the
children are non-overlapping and have half the size of their parent.
The main obvious property we need here is that, in the balanced quadtree, each
vertex is assigned to exactly one leaf square, and all the equally-sized
neighbouring squares are empty of any vertices.

The traditional representation has each square storing $2^d$
pointers to children, $3^d-1$ pointers to neighbouring equal-sized squares, 
the coordinates, and the height (the logarithm of the size) of the square.
Leaf squares store a pointer to the
set of vertices they contain.  Each
point stores a pointer to the square in which it lies.  The implementations of
most of the functions are fairly obvious given this representation.  
Insertion and deletion are also relatively straightforward.
They take time linear in the depth of the tree; given that the points are at
integer coordinates, the depth of the tree is at most $w$.
I should draw attention to the fact that $\call{Vertices}$ is not required to
report a copy of the set of vertices in the square, but merely pointers that
can be used to iterate over the set.  

The restricted Voronoi query requires a bit more discussion.  
If the subspace to which we are restricting is the entire volume $[0,W]^d$,
Hudson and \turkoglu{}~\cite[Theorem 7]{hudson08efficient} showed how to
compute the restricted Voronoi cell of a point assuming the point set is
well-spaced:
First, look up the square that contains the query point $q$.  Then, scan the
squares, in order of distance from $q$.  Vertices encountered during the scan
are stored as the putative Voronoi neighbours.  Using the appropriate notion of
distance, this procedure indeed computes the exact Voronoi cell, and only
accesses a bounded number of squares.

For an unknown 2-manifold $S$ lying in arbitrary dimension, assuming a
\emph{locally regular sample} (which is a more stringent requirement than the
$\epsilon$-net condition) we can use the procedure defined by Funke and
Milosavljevic~\cite{funke07network} to compute in constant time (looking up a
bounded number
of squares) the Voronoi cell restricted to a 2-manifold $S'$ that approximates
$S$.  For higher-dimensional manifolds, the situation is still open, though we
know how to compute a bounded-size superset of the restricted Voronoi
cell~\cite{hudson08towards} that still needs to be
pruned~\cite{cheng05manifold}.  Nevertheless, in all cases, the
Voronoi query is in two parts: (a) look up the balanced quadtree square $s$ that
contains the query point.  (b) look up a bounded number of nearby quadtree
squares, each of them no more than a constant factor smaller than $s$.

\section{Implicit Quadtree via Morton Ordering}
\label{sec:topo}
\label{sec:morton}

The first new result in this paper is to show how to nearly eliminate the cost
of storing the quadtree.
The key observation is that the balanced quadtree is formed from the leaves
of a trie with arity $2^d$.  To
recur into a split child is equivalent to reading one more bit of each
coordinate.
A trie square $s$ is represented by its minimum corner $p$ and a height
$h$.  It must be that $p$ is a prefix of length $d (w - \log h)$, and
the lower $\log h$ bits of each coordinate are all zero; otherwise, $s$ is
not a square of the trie.  To split a trie square, simply halve $h$ and use
one additional bit of each coordinate in $p$.  To find a neighbour of a
trie square, add $h$ to the appropriate coordinates of $p$.

This motivates computing the Morton Order~\cite{morton66computer} (also known
as the Z-order).  Define the Morton number of a point as the $(d \log n)$-width
integer formed by interleaving the bits of the coordinates, see
Figure~\ref{fig:morton}.  The Morton order of a set $P$ sorts by the Morton
number.  The data structure representing the quadtree is then simply an array:
we explicitly store the coordinates of the points.  I now show how to
implement the quadtree interface of the prior section:

\begin{figure}
\centering
\begin{tabular}{lccc}
Decimal: & (3, 5)	   &     & (4, 2) \\
Binary:  & (011, 101) 	   &     & (100, 010) \\
Morton:  & 01 10 11	   & $<$ &  10 01 00 
\end{tabular}
\caption{\label{fig:morton} The Morton order of points (3, 5) and (4, 2).}
\end{figure}

\paragraph{$\call{Vertices}(s)$:}
We can look up the contents of a trie square $s = (p, h)$: search for the
successor of the minimum corner of $s$ (namely, $p$) in the Morton order,
and search for the predecessor of the maximum corner of $s$ (namely, $p$
translated by $h$ in each coordinate).  Every point strictly within this
range has a prefix that places it within the square of interest.  
The two binary searches take $O(\log n)$ time each and return the bounds of a
subarray over which we can iterate.

\paragraph{\call{Neighbour} and \call{Child}:}
The representation of a square now is merely a prefix and a size, which allows
computing these functions in constant time using mere arithmetic.  To get an
equal-sized neighbour of the square, merely add or subtract the size from each
coordinate.  To get the child of a square, append $d$ bits to the prefix and
halve the size.

\paragraph{$\call{SquareOf}(p)$:}
The leaf square that contains $p$ and is uncrowded shares a common prefix with
$p$; the only question is how much of the prefix needs to be kept.  The answer
can be determined by binary search over the $w$ possible answers.
Given a trie square $s$, we test whether the square is crowded.
First, do a range search for $s$.  If it indicates a range of two or more
vertices, then $s$ is crowded.  If it indicates an empty range, $s$ is
uncrowded.  Otherwise, if the range of $s$ contains exactly one vertex,
iterate over each neighbour $s'$ of $s$ and do a range search on $s'$.  If
$s'$ is non-empty, $s$ is crowded, otherwise is it uncrowded.  There are
$3^d - 1$ neighbours of $s$, so testing for crowding takes $O(\log n)$ time.
We do this $O(\log w)$ times, for a total of $O(\log(n) \log(w))$ runtime.

\begin{theorem}
Given a well-spaced point set $P$ in an array sorted by Morton order, we
can compute the restricted Voronoi cell of a given vertex $v$ in $O(\log(n)
\log(w))$ time.
\end{theorem}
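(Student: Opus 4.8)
The plan is to reduce the restricted Voronoi query to the quadtree interface already implemented, following the two-part recipe recalled at the end of Section~\ref{sec:quadtree}. First I would call $\call{SquareOf}(v)$ to obtain the balanced-quadtree leaf square $s$ that contains $v$; by the analysis above this costs $O(\log(n)\log(w))$ time, and this single call will dominate the running time. It then remains to carry out part (b): gather the vertices lying in a bounded number of nearby leaf squares and run the exact geometric computation of Hudson and \turkoglu{} (or of Funke and Milosavljevic in the surface-reconstruction setting) on the resulting bounded-size set of candidate restricted Delaunay neighbours.

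For part (b), recall that their procedure only inspects quadtree squares within a bounded distance of $v$ and no more than a constant factor smaller than $s$; because the balanced quadtree is graded, every square it inspects is within a constant factor of $\mathrm{size}(s)$, and there are only $O(1)$ of them, the constant depending on $d$ and on the well-spacedness parameter $\rho$. Each such square is reached from $s$ by a bounded number of $\call{Neighbour}$, $\call{Child}$, and ``parent'' steps, and each of those is a constant-time arithmetic operation on the (prefix, size) encoding. To learn which of these trie squares is actually a leaf of the balanced quadtree, I run a $\call{Vertices}$ range query on it — an empty or a two-or-more range tells me to move to the parent or to recurse into children, respectively — and each range query costs $O(\log n)$. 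Since the whole walk touches only $O(1)$ trie squares, the total cost of locating the relevant leaves and reading off their vertices is $O(\log n)$. (If one prefers to avoid this leaf-finding walk, one may instead invoke $\call{SquareOf}$ on a bounded set of probe points in the neighbourhood of $v$; since there are only $O(1)$ probes this still costs $O(\log(n)\log(w))$ overall.)

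Finally, the union of the vertex lists returned by the $O(1)$ calls to $\call{Vertices}$ is, by the correctness argument of Hudson and \turkoglu{}, a superset of the restricted Delaunay neighbours of $v$; feeding this bounded-size set to their exact Voronoi routine produces $V_v$ in $O(1)$ additional time. Summing the three phases gives $O(\log(n)\log(w)) + O(\log n) + O(1) = O(\log(n)\log(w))$, as claimed.

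The step I expect to be the main obstacle is the bookkeeping in part (b): one must argue carefully that the grading of the balanced quadtree, together with $\rho$-well-spacedness, really does confine the search to $O(1)$ trie squares each within a constant factor of $\mathrm{size}(s)$, so that no leaf square relevant to $V_v$ is missed and the $O(\log n)$ bound on the range-query phase genuinely holds. Everything else is a straightforward composition of interface operations whose individual costs were established above.
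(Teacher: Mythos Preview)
Your proposal is correct and follows essentially the same approach as the paper: the theorem is stated immediately after the implementations of \call{Vertices}, \call{Neighbour}, \call{Child}, and \call{SquareOf} are given, and is meant to be read as the direct composition of those costs with the two-part Voronoi recipe from Section~\ref{sec:quadtree}. Your write-up is in fact more explicit than the paper about the leaf-finding bookkeeping in part~(b); the paper simply relies on the earlier statement that only a bounded number of nearby squares, each a constant factor of the size of $s$, need be inspected.
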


\section{Lossless Compression}
\label{sec:diffcode}

Given a list of integers $i_1, \ldots, i_n$, we can often reduce the storage
requirement of the set (normally $wn$ bits) by difference-encoding the
integers, assuming subsequent integers are on average close.  The storage
representation is as follows: let $\delta_j \equiv i_j - i_{j-1}$.  We store
the first element $i_1$ in longhand, using $w$ bits.  For the $j$th element, we
store only $\delta_j$ using a variable bit length encoding of $\delta_j$.
For concreteness, I assume we use a gamma-code: first, a count in unary (using
zeroes) of the number of bits needed to represent $\delta_j$, then $\delta_j$
starting at its leading '1' bit, and finally a sign bit (not necessary if
$\delta_j$ is zero).  The gamma code uses 1 bit if $\delta_j = 0$, or $1 + 2
(\lg |\delta_j| + 1)$ for non-zero values.  Thus, if on average $\delta_j$ is
$O(1)$, the set takes only $w + O(n)$ bits to store.  Note that instead of
using the subtraction operator, we could have used the bitwise operator XOR and
achieved the same asymptotic result.  I refer to this encoding strategy as the
\textbf{difference-code} when using the subtraction operator, and the
\textbf{xor-code} when using bitwise exclusive or.

\begin{figure}
\begin{tabular}{|c|cc|crc|}
\hline
value	 & uncompressed & bits  & xor 		& xor-coded & bits  \\ \hline
$(5, 2)$ & 00101,00010  & 10    & 	       	& 00101,00010 & 10 \\
$(6, 3)$ & 00110,00011	& 10    & (11, 1)      	& 0011,01   & 6 \\
$(8, 4)$ & 01000,00100  & 10    & (1110, 111)  	& 00001110,000111 & 14 \\
$(9, 6)$ & 01001,00110	& 10    & (1, 10)      	& 01,0010   & 6 \\
$(10, 6)$& 01010,00110	& 10    & (11, 0)      	& 0011, 1   & 5   \\ \hline
total	 &		& 50	&	  	&	    & 41 \\ \hline
\end{tabular}
\caption{\label{fig:xor-code} The encoding of a list of 5 points on a 5-bit
machine.  On a 32-bit machine the encodings would take 320 and 95 bits respectively.}
\end{figure}

My encoding strategy applies the xor-code to each dimension in turn.  That is,
first, sort the points by Morton order.  Second, xor-code the array consisting
of all their $x$ coordinates.  Third, xor-code their $y$ coordinates, and so
on.  See Figure~\ref{fig:xor-code}.  I use the xor-code rather than the
difference-code for an easier analysis; it happens to also be very slightly
more space-efficient in practice.

\begin{theorem}
\label{thm:uniform-compression}
Let $f$ be a constant function defined on a compact space $S$; that is, there is
a constant $f_0$ such that $f(x) = f_0$ for all $x \in S$.  Let $P$ be a set of
$n$ points that forms an $\epsilon$-net over $S$ with respect to $f$, for some
constant $\epsilon$.  Then we can store $P$ using $O(w + n)$ bits, where the
constant depends only on $f_0$, $\epsilon$, and the dimension $d$.
\end{theorem}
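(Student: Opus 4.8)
The plan is to exhibit an encoding of a rounded copy of $P$ whose length, beyond an unavoidable $O(w)$ start-up cost, is at most the number of edge-crossings made by a depth-first traversal of the balanced quadtree; since that quadtree has $O(n)$ nodes, the total is $O(w+n)$. Sort $P$ into Morton order $p_1,\dots,p_n$; let $\ell_j$ be the balanced-quadtree leaf containing $p_j$, of side $2^{e_j}$, so that its depth is $w-e_j$; and let $\tilde p_j$ be the point obtained by rounding each coordinate of $p_j$ down to a multiple of $2^{\max(0,\,e_j-c)}$, where $c=c(d,\epsilon)$ is a constant ($c=0$ snaps $p_j$ to the minimum corner of its own leaf and already gives the bit bound; a larger $c$ only sharpens the distance distortion of the lossy representation). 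The stored data are: the difference-code of $e_1,\dots,e_n$; and, for each coordinate $t$, a modified xor-code of $\tilde p_1[t],\dots,\tilde p_n[t]$ --- namely $\tilde p_1[t]$ written out in $w$ bits, followed, for each $j\ge 2$, by the gamma-code of the integer $(\tilde p_{j-1}[t]\xor\tilde p_j[t])/2^{m_j}$, where $m_j=\min(\max(0,e_{j-1}-c),\max(0,e_j-c))$. The quotient is an integer because $\tilde p_{j-1}[t]$ and $\tilde p_j[t]$ each end in at least $m_j$ zero bits. Decoding runs left to right: having recovered $e_{j-1}$ and $e_j$, one knows $m_j$, hence $\tilde p_j$ from $\tilde p_{j-1}$.

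To bound the length, recall that a gamma-code of an integer $v\ge 0$ takes $O(1+\lg(1+v))$ bits. The difference-code of the $e_j$ thus takes $O(w+n)+\sum_{j\ge 2}O(\lg(1+|e_{j-1}-e_j|))$ bits. For the modified xor-codes, let $h_j$ denote the base-two logarithm of the side of the smallest square of the underlying $2^d$-ary trie that contains both $\tilde p_{j-1}$ and $\tilde p_j$. Because the balanced quadtree is a subtree of that trie containing the root, this square is exactly the lowest common ancestor $a_j$ of $\ell_{j-1}$ and $\ell_j$ \emph{in the balanced quadtree}; and since all bits of the two coordinates above level $h_j$ agree, $\tilde p_{j-1}[t]\xor\tilde p_j[t]<2^{h_j}$, so the gamma-code at step $j$ for coordinate $t$ occupies $O(1+h_j-m_j)$ bits (here $m_j\le e_j\le h_j$, so this is nonnegative). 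Summing over the $d$ coordinates and adding the level code, the whole representation uses $O(w+n)+O(\sum_{j}(h_j-m_j))+O(\sum_{j}|e_{j-1}-e_j|)$ bits, using $\lg(1+x)\le x$ on the last term. It remains to show both sums are $O(n)$.

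This is where the quadtree enters. The Morton order of $P$ is exactly the order in which a depth-first, Z-order traversal of the balanced quadtree visits its point-containing leaves. Let $T'$ be the subtree spanned by the root and $\ell_1,\dots,\ell_n$; since a uniform $\epsilon$-net is $\epsilon/(1-\epsilon)$-well-spaced (Section~\ref{sec:prelims}), the size bound of Hudson~\cite{hudson09size} gives $|E(T')|\le|E(T)|=O(n)$. Going from $\ell_{j-1}$ to $\ell_j$, the traversal ascends from $\ell_{j-1}$ to $a_j$ and descends to $\ell_j$, crossing $(\mathrm{depth}(\ell_{j-1})-\mathrm{depth}(a_j))+(\mathrm{depth}(\ell_j)-\mathrm{depth}(a_j))=(h_j-e_{j-1})+(h_j-e_j)$ edges; since each edge of $T'$ is crossed at most twice in the whole traversal, $\sum_{j}\bigl((h_j-e_{j-1})+(h_j-e_j)\bigr)\le 2|E(T')|=O(n)$, and every term is nonnegative. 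Finally, $h_j-m_j\le\frac{1}{2}\bigl((h_j-e_{j-1})+(h_j-e_j)\bigr)+\frac{1}{2}|e_{j-1}-e_j|+c$, while $|e_{j-1}-e_j|\le(h_j-e_{j-1})+(h_j-e_j)$ because $\mathrm{depth}(a_j)\le\min(\mathrm{depth}(\ell_{j-1}),\mathrm{depth}(\ell_j))$; hence both sums are $O(n)$, the total is $O(w+n)$, and the hidden constant depends only on $d$, on $c=c(d,\epsilon)$, and on the constant in $|E(T)|=O(n)$.

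The step I expect to be the main obstacle is the second paragraph: choosing the two bookkeeping exponents --- $h_j$, the common-ancestor level, which governs the leading bit of the coordinate difference, and $m_j$, the smaller of the two leaf levels, which is precisely how many low bits may be stripped while keeping the code decodable --- so that each gamma-code has length $\Theta(1+h_j-m_j)$, \emph{and} verifying that the stripped value is invertible during decoding, \emph{and} then recognising $\sum_j\bigl((h_j-e_{j-1})+(h_j-e_j)\bigr)$ as a traversal length on $T'$. After that, the gamma-code size estimate, the triangle inequality on tree depths, and the appeal to $|E(T)|=O(n)$ are routine. One should also check in passing that $p_j\mapsto\tilde p_j$ is injective --- which holds because distinct leaves have distinct corners and the rounding grid refines the leaf partition --- so that distinct points of $P$ are never conflated.
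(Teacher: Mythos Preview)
Your DFS edge-counting is sound, but as written the argument proves a weaker statement than the theorem asks for. The theorem lives in the section titled ``Lossless Compression'' and means that $P$ itself is recoverable; your encoding stores only the rounded points $\tilde p_j$. With $c=c(d,\epsilon)$ independent of $f_0$, rounding genuinely discards bits: in a uniform $\epsilon$-net the balanced-quadtree leaf of each point has side $\Theta(f_0)$, so $e_j\approx\lg f_0$ and $e_j-c>0$ once $f_0$ is large. Checking that $p_j\mapsto\tilde p_j$ is injective does not show that $P$ is recoverable. The easy fix is to let $c$ depend on $f_0$ as well (the theorem explicitly permits this): taking $c\ge\max_j e_j$ forces $m_j=0$, so no bits are dropped, and your $+c$ term in the final inequality then carries a constant depending on $f_0$.

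The paper's own proof is shorter and avoids the height code and the bit-stripping entirely: it applies the plain xor-code to each coordinate and observes that the total bit-length is $d$ times the number of non-empty nodes in the full bit-trie, which it splits into (i) nodes that lie in the balanced quadtree---$O(n)$ by the size bound plus $O(w)$ for the chain above the smallest square containing $S$---and (ii) nodes strictly below a balanced-quadtree leaf. The uniformity of $f$ is what handles (ii): every point has a neighbour within $O(\epsilon f_0)$, hence $e_j\le\lg O(\epsilon f_0)$, so each below-leaf chain has constant length and category (ii) contributes $O(n)$ with the constant depending on $f_0$. Your machinery---rounding, storing $h_i-h_{i-1}$, stripping trailing zeros---is exactly what the paper deploys in Theorem~\ref{thm:lossy-compression} for general well-spaced sets, where $e_j$ is unbounded and Lemma~\ref{lem:rounding-needed} makes loss unavoidable; for the present constant-$f$ theorem it is unnecessary overhead.
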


\begin{proof}
The proof is via an equivalence to the quadtree.  Consider the full trie---all
squares representable with $d$ integers of $w$ bits each.  A point $p$ denotes
a path from root to leaf in this trie.  The representation of the difference between
$p_{i-1}$ and $p_i$ denotes a path from the first point to the second.  For
simplicity, assume all the coordinates have equal-length XORs (if not, we store
only less than is analyzed here).  A 0 bit of the length field in each of the $d$
xor-encoded dimensions corresponds to moving up one level in the quadtree.
A bit in the xor in each of the $d$ dimensions corresponds to choosing a child
into which to travel.  Thus the total number of bits of the encoding of the
difference between $p_{i-1}$ and $p_i$ is precisely $d$ times the length of
the path.

The Morton order is a depth-first ordering of the points (the non-empty
leaves); therefore, the overall path length of traversing the points in Morton
order is linear in the number of nodes of the trie (internal or leaf) that
contain at least one point.  I divide this set of nodes in two: first,
non-empty nodes not in the balanced quadtree over $P$; second, non-empty nodes
that are indeed in the balanced quadtree.  We know by the $\epsilon$-net
condition that there is a point within distance $2\epsilon f_0$ of $p$, so
$s_p$ has size at most $\epsilon f_0$ or else the cell is crowded (which would
contradict that it is a leaf).  Thus the first category contains at most $\lg
(2 \epsilon f_0) n = O(n)$ nodes.  To count the second category, consider the
smallest quadtree cell $s_S$ that contains all of $S$.  A theorem of Hudson,
Miller, Phillips, and Sheehy shows that within $s_S$, a volume mesh such as the
balanced quadtree over the $(1-\epsilon)^{-1}$-well-spaced set $P$ has $O(n)$
leaves~\cite{hudson09size}.  The number of nodes in a $2^d$-tree is only a
constant fraction larger than the number of leaves.  We are still left with
counting the ancestors of $s_S$; there are at most $w$ of them.

Putting the arguments together, we see that the compressed representation takes
precisely $dw$ bits to store the first point, plus $O(n + w)$ bits to store the
subsequent points, assuming $f_0$, $\epsilon$, and the dimension $d$ are constants.
\end{proof}

\section{Lossy Compression}

The lossless compression of the previous section only provably works if the
points are on average at a distance near unity from their predecessor.  While
many data sets have this behaviour (for example, the Stanford bunny data set,
scaled by $10^7$), so far our technique does not handle well-spaced point sets
as advertised.  For example, in one dimension, the points $1, 2, \ldots, 2^n$
are well-spaced but will require $n \log n$ bits in the lossless compression
format.  This motivates developing what amounts to a compressed floating-point
format.

More devastatingly, we can develop an information-theoretic lower bound on our
ability to compress data.  The adversary can take an arbitrary well-spaced
subset in $w/2$ bits, then append a '0' to each integer followed by $w/2-1$
bits of random noise.  This noised set is only slightly less well-spaced than
the original: the '0' keeps vertices from becoming arbitrarily close to each
other.  Information theoretic arguments show that with high probability, the
adversary has won: no code can store $o(w)$ bits and represent the random noise
exactly.  The lower bound here does not disprove the claim of
Theorem~\ref{thm:uniform-compression}: $f_0$ is $2^{w/2}$, which is not a
constant factor.  This means we are forced to round the input.

\begin{lemma}
\label{lem:rounding-needed}
No data structure can exactly store an arbitrary well-spaced set of $n$ points
in $[0,W]^d$ using $o(nw)$ bits.
\end{lemma}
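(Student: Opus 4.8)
The plan is to establish the lower bound by exhibiting a large family of well-spaced point sets that are pairwise ``distinguishable'' — that is, any exact storage scheme must assign them distinct codewords — and then counting. The key is that exactness of storage means the decoding map is injective on the family, so the number of bits used on the worst member is at least $\log_2$ of the family size. I will build the family so that it has roughly $2^{\Omega(nw)}$ members, forcing some member to need $\Omega(nw)$ bits.

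First I would fix a coarse scaffold: place a grid of $n$ ``anchor'' cells inside $[0,W]^d$, each of side length $W/n^{1/d}$ (for simplicity take $n^{1/d}$ an integer, so there are exactly $n$ such cells), and commit to putting exactly one point inside each anchor cell. Because consecutive anchor cells are separated by $\Theta(W/n^{1/d})$ and each cell has diameter the same order, any choice of one point per cell yields a set whose nearest-neighbour distances and Voronoi-cell radii are all $\Theta(W/n^{1/d})$; hence every such set is $\rho$-well-spaced for an absolute constant $\rho$ (depending only on $d$). This is essentially the same ``append low-order noise to a spread-out skeleton'' device sketched just before the lemma, made quantitative: the skeleton guarantees well-spacedness, and the freedom inside each cell is what carries the information.

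Second, I would count the freedom. Within its anchor cell, a point's location can be any of the integer lattice points of that cell; there are about $(W/n^{1/d})^d = W^d/n$ choices, i.e.\ roughly $w - \log_2 n$ bits of choice per point (using $W = 2^w - 1$), and these choices are independent across the $n$ cells. So the family has at least $\bigl(2^{w}/n\bigr)^{n} = 2^{n(w - \log_2 n)}$ distinct well-spaced sets. Any data structure that ``exactly stores'' the set must let us recover the set, hence recover each point's exact coordinates, hence recover every one of these independent choices; so the encoding is injective on the family and, by the pigeonhole principle, some member requires at least $\log_2$ of the family size, namely $n(w - \log_2 n) = nw - n\log_2 n$ bits. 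For $w = \omega(\log n)$ this is $(1 - o(1))nw = \omega(nw/\text{anything})$; and even when $w = \Theta(\log n)$ one can absorb the $n \log n$ term by shrinking the anchor cells only to side $W/n^{1/(2d)}$, say, keeping $\Theta(nw)$ bits of freedom while still only halving the exponent — well-spacedness is unaffected since the argument above only needs cell diameter comparable to inter-cell spacing. I would phrase the final bound as $\Omega(nw)$ to cover all regimes of $w$ relative to $n$.

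The main obstacle — really the only subtle point — is making the well-spacedness claim airtight uniformly over the whole family: I must verify that for \emph{every} selection of one lattice point per anchor cell, \emph{every} restricted Voronoi cell $V_p$ has bounded aspect ratio, including the pathological selections where a point sits in the corner of its cell touching a neighbour's corner-point. The clean way is to bound the numerator and denominator of the aspect ratio separately: $\min_{q\in P}|pq| \ge $ (inter-cell gap minus twice cell diameter), which stays $\Theta(W/n^{1/d})$ if the cells are, say, spaced two cell-widths apart rather than packed flush; and $\max_{x \in V_p}|px|$ is at most the distance to the farthest anchor cell whose point could claim territory near $p$, which is again $O(W/n^{1/d})$ because points two cells away already ``fence in'' $V_p$. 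Once the spacing constant is chosen generously enough that these two estimates are within a constant factor of each other, $\rho$ is an absolute constant and the counting argument goes through. Everything else is routine.
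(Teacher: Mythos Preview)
Your argument is correct and is essentially the paper's own, which appears only as the informal paragraph immediately preceding the lemma: fix a well-spaced skeleton (the paper uses the top $w/2$ bits of an arbitrary well-spaced set, followed by a guard zero; you use a regular grid of anchor cells), let the low-order bits vary independently to obtain a family of size $2^{\Omega(nw)}$, and invoke pigeonhole. One small slip worth noting: your side remark for the regime $w = \Theta(\log n)$ does not work as written---cells of side $W/n^{1/(2d)}$ are \emph{larger}, so there are only $\sqrt{n}$ of them and you cannot place $n$ points---but this patch is unnecessary anyway, since your main count already gives $n(dw - \log_2 n) = \Omega(nw)$ whenever $w$ exceeds $(\log_2 n)/d$ by any fixed constant factor, and the paper's sketch makes no attempt to cover the borderline regime either.
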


\begin{wrapfigure}{r}{8.5cm}
\centering
\includegraphics[width=8cm]{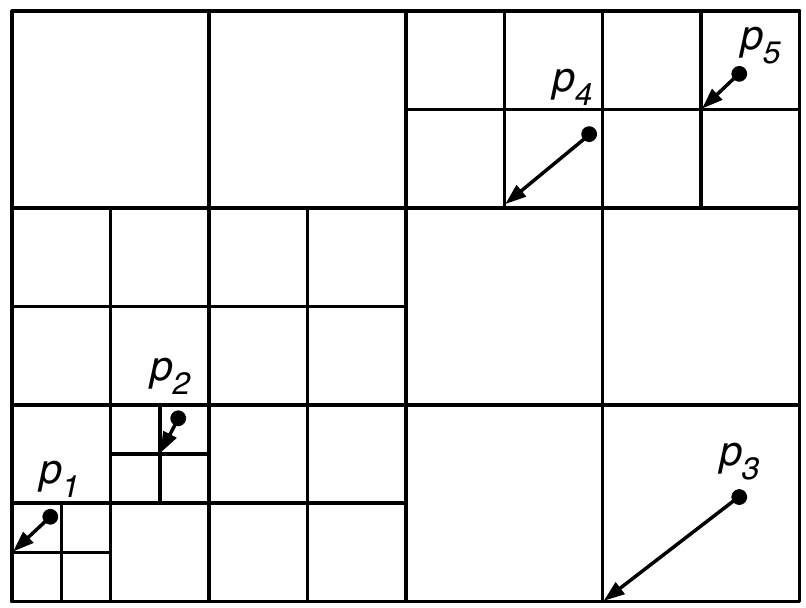}
\caption{\label{fig:rounding} Balanced quadtree of five points in Morton order.  Each point has
an arrow showing its rounded form when $\gamma=0$.}
\end{wrapfigure}
\paragraph{Rounding:}
The rounding routine takes a parameter $\gamma$, which corresponds to the
desired precision: greater $\gamma$ is less lossy, but requires storing more
information.  Consider a point $p$ whose balanced quadtree leaf is at height
$h$.  To round, zero out the lower $h - \gamma$ bits of each coordinate of $p$,
producing a point $p'$.  Intuitively it should be clear that with sufficiently
large $\gamma$, we can maintain any desired well-spacing or $\epsilon$-net
property, so that rounding has little effect.  The key notion is that the
balanced quadtree over the rounded set $P'$ is identical to the balanced
quadtree over the original set $P$.  See Figure~\ref{fig:rounding}.

\begin{lemma}
\label{lem:rounding-error}
Given a pair of input points $p$ and $q$, after rounding them to $p'$ and $q'$,
both the ratio $|p'q'|/|pq|$ and its reciprocal are bounded by $1 + 2^{1-\gamma}\sqrt{d}$.
\end{lemma}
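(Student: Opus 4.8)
The plan is to bound how far each point moves under rounding, then apply the triangle inequality. Fix a point $p$ with balanced quadtree leaf of height $h$; rounding zeroes out the low $h - \gamma$ bits of each of the $d$ coordinates, so each coordinate moves by less than $2^{h-\gamma}$, and hence $|pp'| < 2^{h-\gamma}\sqrt{d}$. The crucial fact I would invoke is that the balanced quadtree over $P'$ equals the balanced quadtree over $P$ (asserted just before the statement, via the rounding construction), so the leaf square containing $p$ has size $2^h$ both before and after; moreover, since that leaf is uncrowded, the nearest other point $q$ (whether before or after rounding) is at distance at least $2^h$ — the neighbouring equal-sized squares are empty, so $|pq|$ is at least the side length of the leaf. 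Actually, to be safe I would only claim $|pq| \geq 2^{h}$ up to the constant implicit in ``a neighbouring square is empty''; a neighbouring square of size $2^h$ being empty forces $q$ to lie at distance at least $2^h$ from $p$'s square, but conservatively $|pq| \ge 2^{h}$ suffices since $p$ may be anywhere in its own leaf — I will take the clean bound $|pq| \geq 2^{h-1}$ if the corner placement causes trouble, absorbing the factor of two. The same argument applied from $q$'s side gives $|qq'| < 2^{h_q - \gamma}\sqrt{d}$ where $h_q \le h$ is $q$'s leaf height; since $q$ is a neighbour of (or inside a neighbour of) $p$'s leaf and the quadtree is balanced, $h_q$ is within a factor of two of $h$, so in fact $2^{h_q} \le 2^{h}$ whenever $q$'s cell is no larger, and symmetrically we can always name the two points so that the larger-celled one governs.

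Now the estimate: $|p'q'| \le |pq| + |pp'| + |qq'| \le |pq| + 2 \cdot 2^{h-\gamma}\sqrt{d} \le |pq|\,(1 + 2^{2-h+1}\cdot 2^{h-\gamma}\sqrt{d})$ after substituting $|pq| \ge 2^{h-1}$; cleaning up, $|p'q'|/|pq| \le 1 + 2^{2-\gamma}\sqrt{d}$. To get the sharper constant $1 + 2^{1-\gamma}\sqrt{d}$ claimed in the statement, I would use the stronger lower bound $|pq| \ge 2^{h}$ that follows from the neighbouring square of size $2^h$ being empty (so $p$ and $q$ cannot both sit near the shared boundary in a way that brings them within $2^h$): with $|pq| \ge 2^h$ we get $|p'q'| \le |pq| + 2^{1+h-\gamma}\sqrt{d} \le |pq|(1 + 2^{1-\gamma}\sqrt d)$. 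The reciprocal direction is symmetric: $|pq| \le |p'q'| + |pp'| + |qq'|$, and since $|p'q'| \ge |pq| - 2^{1+h-\gamma}\sqrt d \ge 2^{h-1}$ for $\gamma$ not absurdly small, the same manipulation yields $|pq|/|p'q'| \le 1 + 2^{1-\gamma}\sqrt d$, possibly after a harmless adjustment of constants.

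The main obstacle is pinning down the exact relationship between the leaf heights of $p$ and $q$ and the precise lower bound on $|pq|$ in terms of $2^h$. The balancing condition guarantees neighbouring leaves differ in size by at most a factor of two, and uncrowdedness guarantees the equal-sized neighbours are empty, but turning ``the equal-sized neighbour of $p$'s leaf is empty'' into a clean numerical lower bound on the distance to the \emph{actual} nearest point $q$ — which may live two cells away, in a cell half the size — requires a short case analysis on where $q$ sits relative to $p$'s leaf. I expect this to cost a constant factor at worst, which is why the statement's constant is $2^{1-\gamma}\sqrt d$ rather than something smaller; if the bookkeeping forces $2^{2-\gamma}\sqrt d$ I would simply note that reparametrising $\gamma$ by one absorbs the discrepancy, since only the asymptotic dependence $2^{-\gamma}$ matters for the applications.
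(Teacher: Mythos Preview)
Your approach matches the paper's---triangle inequality, displacement bounded by $2^{-\gamma}\sqrt d$ times the leaf size, and $|pq|$ lower-bounded by the leaf size via uncrowdedness---but you tangle yourself in bookkeeping that the paper sidesteps with two clean moves, and as a result you only obtain the bound up to a constant you cannot pin down.

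First, the worry about $h_p$ versus $h_q$ and the appeal to balancing are unnecessary.  Simply let $s$ be the size of the \emph{larger} of the two leaves, say $p$'s.  Then $|pp'| \le 2^{-\gamma}\sqrt d\, s$, and also $|qq'| \le 2^{-\gamma}\sqrt d\, s$ because $q$'s leaf is no larger and hence $q$ moves no farther.  No balancing, no case analysis on where $q$ sits.

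Second, the hedging between $|pq| \ge s$ and $|pq| \ge s/2$ is unwarranted.  Uncrowdedness of $p$'s leaf means every equal-sized neighbouring square is empty of vertices; the union of the leaf with its $3^d-1$ neighbours is a cube of side $3s$ with $p$'s leaf in the centre, so $q$ lies outside this cube and differs from $p$ by at least $s$ in some coordinate.  Thus $|pq| \ge s$ sharply, and $|p'q'| \le |pq| + 2\cdot 2^{-\gamma}\sqrt d\, s \le (1 + 2^{1-\gamma}\sqrt d)\,|pq|$ with the exact stated constant.  For the reciprocal, rather than your inequality chain, use the fact (asserted just before the lemma) that the balanced quadtree over $P'$ coincides with that over $P$: then $p'$'s leaf also has size $s$, the same uncrowdedness argument gives $|p'q'| \ge s$, and the bound is literally symmetric in $(p,q)$ and $(p',q')$.
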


\begin{proof}
By the triangle inequality, $|p'q'| \leq |pq| + |pp'| + |qq'|$.
Without loss of generality, assume $p$ lies in the larger quadtree leaf, and
that this leaf has size $s$.  Rounding moves the points by at most
$2^{-\gamma}\sqrt{d}s$.  Thus $|p'q'| \leq |pq| + 2^{1 - \gamma}\sqrt{d} s$.
Finally, the definition of crowding implies that $s \leq |pq|$, which proves
$|p'q'| \leq (1 + 2^{1 - \gamma}\sqrt{d}) |pq|$.  The proof for the reciprocal
is symmetric.
\end{proof}

\paragraph{Encoding:}
Since we have rounded off some bits, it would be wasteful now to represent all
the zeroes.  Instead, we will store only the number of bits we rounded off ---
that is, the height $h$.  Note that this means we are essentially storing a
floating-point number.  Furthermore, from one point to the next, the height
differs only slightly, so I recommend to difference-code the heights.  Thus,
assuming $\gamma = 0$, the two-dimensional point $(x_i, y_i)$ whose balanced
quadtree leaf is at height $h_i$ is represented as the triple of integers $(h_i
- h_{i-1}, (x_i \xor x_{i-1}) >> h_i, (y_i \xor y_{i-1}) >> h_i)$ where $>>$
denotes a logical right shift.

To decode a point $\mathbf{x}_i$ knowing $\mathbf{x}_{i-1}$ and $h_{i-1}$,
we decode $\partial h_i$ and add $h_{i-1}$ to obtain $h_i$.  Then we decode
each $\delta \mathbf{x}_{i,j}$ in turn, shift $\mathbf{x}_{i-1, j}$ right by
$h_i - \gamma$ bits, xor with $\delta \mathbf{x}_{i,j}$, then shift this
quantity left by $h_i - \gamma$.

\begin{theorem}
\label{thm:lossy-compression}
Let $P$ be a set of $n$ $\rho$-well-spaced points.  Round $P$ to $P'$ with the
rounding coefficient $\gamma$.  We can store $P$ using $O(w + n\gamma)$ bits,
where the constant depends only on $\rho$ and the dimension $d$.
\end{theorem}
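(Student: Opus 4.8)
The plan is to reuse the quadtree correspondence from the proof of Theorem~\ref{thm:uniform-compression}, applied to a trie carrying $\gamma$ extra levels that absorb the rounding. First I would record, as foreshadowed in the discussion of rounding, that the balanced quadtree over the rounded set $P'$ is the balanced quadtree over $P$: rounding keeps each point inside its own leaf square and cannot make two points coincide, since each balanced quadtree leaf holds at most one vertex, so the point-to-leaf assignment---hence the whole construction, including every height $h_i$ and the Morton order---is unchanged. In particular the decoded set is $P'$, which differs from $P$ within the tolerance of Lemma~\ref{lem:rounding-error}; ``store $P$'' is meant in this lossy sense.

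Next I would set up the correspondence. Form $T_\gamma$: the full $d$-coordinate, $w$-bit trie, with a length-$\gamma$ chain hung below each balanced quadtree leaf of height $h$, ending at the height-$(h-\gamma)$ cell to which rounding sends that leaf's vertex (clamped to the leaf itself when $h<\gamma$). As in the proof of Theorem~\ref{thm:uniform-compression}, the code for the pair $(\mathbf{x}_{i-1},\mathbf{x}_i)$ splits, dimension by dimension, into ``up'' moves (the zeros of the length fields, together with $\partial h_i$) and ``down'' moves (the xor bodies, shifted right by $h_i-\gamma$). Writing $2^{\ell_i}$ for the size of the lowest common ancestor in the trie of the leaves of $\mathbf{x}_{i-1}$ and $\mathbf{x}_i$, each xor body has bit-length at most $\ell_i-h_i+\gamma$, while $|\partial h_i|=|h_i-h_{i-1}|\le(\ell_i-h_i+\gamma)+(\ell_i-h_{i-1}+\gamma)$, so the number of bits spent on pair $i$ is $O\big(d\,[(\ell_i-h_i+\gamma)+(\ell_i-h_{i-1}+\gamma)]\big)$, i.e.\ within a constant factor of $d$ times the length of the path in $T_\gamma$ joining the two vertices. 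Since the Morton order is the depth-first order of these leaves, $\sum_{i=2}^n$ of those path lengths is a depth-first tour of the subtree of $T_\gamma$ spanned by the $n$ vertices, hence at most twice the number of edges of that subtree.

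It then remains to count those edges. By the theorem of Hudson, Miller, Phillips and Sheehy~\cite{hudson09size}, the balanced quadtree over the $\rho$-well-spaced set $P$ has $O(n)$ leaves, with the constant depending only on $\rho$ and $d$; being $2^d$-ary, it then has $O(n)$ internal nodes, plus at most $O(w)$ more along the single chain from the root down to the smallest quadtree square meeting $P$, for $O(n+w)$ edges altogether. Extending each of the $n$ nonempty leaves by $\gamma$ levels contributes a further $n\gamma$ edges. Thus the spanned subtree of $T_\gamma$ has $O(n+w+n\gamma)$ edges, its depth-first tour has that length, and summing the per-pair bound gives $O\big(d\,(n+w+n\gamma)\big)$ bits for points $2,\dots,n$; adding the $O(dw)$ bits used longhand for $\mathbf{x}_1$ and $h_1$ yields $O(w+n\gamma)$ (for $\gamma\ge 1$; in general $O(w+n(1+\gamma))$), the constant a function of $\rho$ and $d$ only.

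The step I expect to need the most care is the per-pair bit count when $h_i\ne h_{i-1}$: one must verify that shifting $\mathbf{x}_i\xor\mathbf{x}_{i-1}$ right by $h_i-\gamma$ (rather than by $\min(h_i,h_{i-1})-\gamma$ or $\max(h_i,h_{i-1})-\gamma$) is lossless under the stated decode rule and still leaves each component of bit-length at most $\ell_i-h_i+\gamma$---using that $\mathbf{x}_i$ and $\mathbf{x}_{i-1}$ agree on every bit above position $\ell_i$ while the rounded $\mathbf{x}_i$ vanishes below position $h_i-\gamma$---and to dispose of the corner case $h_i<\gamma$, where no rounding occurs and the shift amount is $0$. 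Everything else is bookkeeping on top of the already-proved Theorem~\ref{thm:uniform-compression}.
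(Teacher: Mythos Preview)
Your proposal is correct and follows essentially the same quadtree--DFS correspondence as the paper's proof; the only packaging difference is that you fold the height-difference accounting into a single depth-first tour of an augmented tree $T_\gamma$ (balanced quadtree with $\gamma$-length chains appended at each nonempty leaf), whereas the paper handles the coordinate bits via the DFS argument of Theorem~\ref{thm:uniform-compression} plus an extra $\gamma$ bits per point, and then separately bounds the height differences by comparing them to a run-length encoding of the $+/-$ string recorded during the DFS. Your unified treatment is a bit cleaner and more explicit about the edge cases (in particular the lossless-decode check when $h_i\neq h_{i-1}$), but the underlying idea and the size bound via~\cite{hudson09size} are identical.
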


\begin{proof}
As in the prior theorem, we use the correspondence with the quadtree.  The bits
we store for the coordinates are paid for by the depth first search argument as
before, except for the $\gamma$ bits we store for additional precision.  It
remains to count the difference-encoded bits for the height.

Consider the following string we could compute during a depth first traversal
through the balanced quadtree: store a $+$ when the search heads to a parent,
and a $-$ when it moves to a child.  The total number of symbols thus stored is
bounded by the number of nodes of the quadtree, which is $O(n)$ for a
well-spaced set of points.  I claim the height differences are essentially a
summary of this same information: indeed, the height difference between $p_i$
and $p_{i+1}$ is the number of $+$ symbols seen when moving from the quadtree
leaf that contains $p_i$ to the least common ancestor with $p_{i+1}$, minus the
number of $-$ symbols from the least common ancestor to the leaf that contains
$p_{i+1}$.  This is clearly more parsimonious (or at worst no worse) than
run-length-encoding the string of $+$ and $-$ symbols.  In the worst case,
run-length-encoding takes $O(n)$ bits when the lengths are stored using the
gamma code.
\end{proof}

\section{Fast Queries}
\label{sec:fast}

The structure we have discussed so far takes only $O(n + w)$ bits to store $n$
well-spaced points, as promised, but it does not fulfill the promise of on-line
queries in sublinear time.  In this section I show how to add a standard
(that is, uncompressed) balanced binary tree over the structure to achieve the
faster runtime.

Recall that a binary search tree over $N$ elements requires $O(N)$ pointers,
which is $O(Nw)$ bits.  This means we can store as many as $n/w$ elements in
the tree and still only use $O(n)$ bits.  The solution, then, is to break the
encoded array of well-spaced points into subarrays of size $n/w$.  In sum,
encoded, the array took $O(n + w)$ bits.  Broken up, we need to pay $O(w)$ bits
to encode the first point in each subarray in longhand (which forms the key in
the search tree).  There are $O(n/w)$ subarrays, so this overhead amounts to
$O(n)$, or a total of $O(n + w)$ bits for the entire array, the overhead for
breaking it into subarrays, and the binary tree.

A \call{Vertices} query in our structure takes time $O(\log (n/w)) \subset
O(\log n)$ time to find the correct subarray.  Then it must perform a linear
scan in the subarray.  The number of points in the subarray is $O(w)$.  While
on average each point requires $O(1)$ bits, in the worst case, they can each
take $O(w)$ bits, so the subarray may have size $O(w^2)$ bits.  Naively, this
will take $O(w^2)$ time to decompress.  With tables of size $\Theta(n)$ bits,
we can decompress $\Theta(\log n)$ bits at a time, thereby decompressing in
$O(w^2 / \log n)$ time.  If $w \in \Theta(\log n)$, we decompress in $O(w) =
O(\log n)$ time.

A \call{SquareOf} query from an uncompressed array in Morton order required
$O(\log w)$ \call{Vertices} queries.  Now, however, there are two cases.  In
the losslessly compressed structure, the square size is always small: we need
only try $O(\log f_0)$ \call{Vertices} queries before we find the quadtree size.
In the lossily compressed structure, we explicitly store the square size, which
means we can reproduce the quadtree square of a point simply by searching for
the point itself.

\begin{figure}
\begin{tabular}{rl}
$\call{SquareOf}(p)$	&	$O((w^2/\log n) + \log n)$ \\
$\call{Vertices}(s)$	&	$O((w^2/\log n) + \log n)$ \\
$\call{Neighbour}(s, i)$&	$O(1)$ \\
$\call{Child}(s, i)$	&	$O(1)$ \\
$\call{Voronoi}(p)$	&	$O((w^2/\log n) + \log n)$
\end{tabular}
\caption{\label{fig:compress-runtimes} Runtimes for the compressed structure.}
\end{figure}

\begin{theorem}
After preprocessing, we can store a well-spaced set of points using $O(n)$ bits
while supporting the quadtree operations in the times listed in
Figure~\ref{fig:compress-runtimes}.
\end{theorem}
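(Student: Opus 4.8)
The plan is to assemble the pieces already developed in Sections~\ref{sec:diffcode}--\ref{sec:fast} and verify that the composition respects both the space and time budgets. First I would fix the representation: apply the rounding of Section~5 with a constant $\gamma$ (chosen so the well-spacing degradation from Lemma~\ref{lem:rounding-error} is acceptable), xor-code the rounded coordinates and difference-code the heights as described, and then break the resulting array into $\Theta(n/w)$ consecutive subarrays, storing the first point of each subarray in longhand as the key of a standard (uncompressed) balanced binary search tree. The space accounting is then exactly the computation already sketched: Theorem~\ref{thm:lossy-compression} gives $O(w + n\gamma) = O(w + n)$ bits for the encoded array; the $\Theta(n/w)$ longhand keys contribute $O((n/w)\cdot w) = O(n)$ bits; and the binary tree over $\Theta(n/w)$ elements needs $O(n/w)$ pointers, i.e.\ $O(n)$ bits. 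Assuming as throughout that $w = O(n)$ (otherwise a single point already does not fit in memory in the intended regime) the total is $O(n)$ bits.

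Next I would establish each runtime in Figure~\ref{fig:compress-runtimes}. \call{Neighbour} and \call{Child} act purely on the $(p,h)$ encoding of a square by arithmetic, so they are $O(1)$ exactly as in Section~\ref{sec:morton}. For \call{Vertices}$(s)$: locate the subarray containing the successor of the minimum corner of $s$ by a binary search in the tree, which is $O(\log(n/w)) \subseteq O(\log n)$; then decode the subarray. A subarray holds $O(w)$ points, each of which can in the worst case occupy $O(w)$ bits, so it is $O(w^2)$ bits; decoding $\Theta(\log n)$ bits at a time using precomputed tables of $\Theta(n)$ bits (hence $o(n)$-bit or $O(n)$-bit tables, permissible within the $O(n)$ budget) gives $O(w^2/\log n)$ decode time, for a total of $O((w^2/\log n) + \log n)$. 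For \call{SquareOf}$(p)$: in the lossy structure the height is stored explicitly, so it suffices to binary-search for $p$ itself — one \call{Vertices}-style query — giving the same bound; I would also note the lossless variant needs only $O(\log f_0)$ such queries, again absorbed. Finally \call{Voronoi}$(p)$ is implemented, per Section~\ref{sec:quadtree}, by one \call{SquareOf} followed by a bounded number of \call{Neighbour}/\call{Child}/\call{Vertices} calls on squares within a constant factor of the size of $s_p$; summing a constant number of $O((w^2/\log n)+\log n)$ and $O(1)$ terms yields the claimed bound.

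The correctness of the query results follows from Section~5's key observation that the balanced quadtree over the rounded set $P'$ is identical to that over $P$, so every quadtree operation returns the combinatorially correct square or vertex set; the reported coordinates are the rounded ones, with relative distance error controlled by Lemma~\ref{lem:rounding-error}. The one point needing care — and the main obstacle — is the decoding step: I must argue that the table-driven decompression genuinely handles the xor-code plus difference-code stream $\Theta(\log n)$ bits per table lookup despite the variable-length gamma codes straddling word and table boundaries. The standard device is to keep, alongside each decoded chunk, the bit offset into the current codeword and a small amount of carry state, and to index the table by (next $\Theta(\log n)$ raw bits, current parse state); since the state is describable in $O(\log n)$ bits and the table has $\Theta(n)$ bits, this fits. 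I would state this as a lemma-level claim and give the table construction explicitly, since it is the only place where the argument is more than bookkeeping. Everything else is a direct reassembly of the already-proved ingredients, so the write-up stays short: one paragraph for space, one for each nontrivial query row, and the decoding lemma.
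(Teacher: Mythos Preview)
Your proposal is correct and follows essentially the same route as the paper: the argument there \emph{is} the inline discussion of Section~\ref{sec:fast}---split into $\Theta(n/w)$ subarrays of $\Theta(w)$ points, index them with an uncompressed balanced tree, and tally exactly the space and query costs you list. The only difference is that you are more careful than the paper about the table-driven gamma-code decompression (the paper asserts it in one sentence), and you make explicit the implicit assumption $w=O(n)$ needed to turn $O(n+w)$ into the stated $O(n)$.
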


\section{Dynamic Operation}
\label{sec:dynamic}

So far we have not discussed the construction of our structures.  Inserting
a vertex requires a semi-dynamic compression format.  The structure of the
prior section is a balanced tree with fat leaves storing subarrays of $\Theta(w)$
vertices.  To support insertions, we let the size of the subarrays vary
between $w$ and $2w$.  Upon adding a vertex $v$, we find the appropriate
subarray and add $v$ to it in the appropriate spot, encoding it and
re-encoding its successor.  If $v$ becomes the head of the subarray, we
update the dictionary appropriately.  Finally, if the new subarray is too
large, we split it in two equal halves.

\begin{lemma}
We can insert a vertex into our compressed structure in time $O(w + \log n)$.
\end{lemma}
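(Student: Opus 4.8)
The plan is to account separately for the three operations that insertion performs: locating the target subarray, modifying the subarray in place, and (occasionally) splitting it. First I would find the subarray into which $v$ falls: this is a search in the balanced binary tree over the $O(n/w)$ subarray heads, keyed by Morton number, which costs $O(\log(n/w)) \subset O(\log n)$ time. Computing the Morton number of $v$ itself costs $O(w)$ time (interleaving $d$ coordinates of $w$ bits), and comparing Morton numbers of already-stored heads is $O(1)$ per node in the word-RAM, so the location step is $O(w + \log n)$.

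Next I would handle the in-subarray work. The subarray has $\Theta(w)$ vertices, and by the argument of Section~\ref{sec:fast} its compressed length is $O(w^2)$ bits, i.e.\ $O(w)$ machine words; decompressing it naively is $O(w^2)$ time, but using the $\Theta(n)$-bit decoding tables from Theorem~\ref{thm:lossy-compression}'s query analysis we decode $\Theta(\log n)$ bits per step, for $O(w^2/\log n)$ time --- which is $O(w)$ when $w \in \Theta(\log n)$, and I would state the bound in that regime exactly as the lemma does. Actually, we need not decompress the whole subarray: because the xor-code and the difference-code for the heights are both computed relative only to the immediate predecessor, inserting $v$ between $p_{i-1}$ and $p_i$ only changes the encoded bytes for $v$ (newly inserted, encoded relative to $p_{i-1}$) and for $p_i$ (re-encoded relative to $v$ instead of $p_{i-1}$). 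So I would: walk the subarray from its head, maintaining the running decoded point and height, until reaching the insertion position --- $O(w)$ arithmetic operations, each on $O(1)$ words; splice in the new gamma-coded triple for $v$ and the re-coded triple for $p_i$; and shift the remaining (still-compressed) tail of the subarray to make room, which is a memory move of $O(w)$ words, hence $O(w)$ time. If $v$ becomes the head of its subarray we additionally update its key in the dictionary, an $O(\log n)$ tree operation.

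Finally I would bound the split. When the subarray exceeds $2w$ vertices we cut it into two halves of $w$ vertices each; the second half must be re-headed, meaning its new first point is stored in longhand ($O(w)$ bits, computed in $O(w)$ time from the already-decoded running point) and the following point re-xor-coded against it. Re-walking and rewriting $O(w)$ vertices is again $O(w)$ time, and inserting the new head into the balanced binary tree is $O(\log n)$; a rebalancing rotation is $O(1)$ amortized (or $O(\log n)$ worst case, which is absorbed). Splits are rare enough that even charged in full they do not change the bound. Summing the three phases gives $O(w + \log n)$.

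The main obstacle I anticipate is the in-place splice: one must argue carefully that no vertex other than $v$ and its immediate successor needs re-encoding --- this relies on each code word depending only on the previous point and the previous height, which holds for both the xor-code and the difference-coded heights --- and that the heights themselves are unaffected, since $v$'s balanced-quadtree leaf height is determined by $P \cup \{v\}$ and in principle inserting $v$ could subdivide a crowded leaf and thereby change the stored heights of nearby vertices. I would finesse this by observing that the dynamic structure stores heights explicitly and that a single insertion changes the balanced quadtree only in $O(1)$ leaves near $v$ (by the well-spacedness/size bounds invoked for Theorem~\ref{thm:lossy-compression}), so at most $O(1)$ neighbouring vertices' height-differences need patching --- each patch being an $O(1)$-word edit at a known offset, hence within the $O(w)$ budget, though possibly spilling into an adjacent subarray, which is still only $O(1)$ further subarrays to touch.
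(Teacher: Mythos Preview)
Your proposal is correct and matches the paper's approach; in fact the paper gives no formal proof of this lemma at all, only the one-paragraph algorithm description immediately preceding it (find the subarray via the balanced tree, encode $v$ and re-encode its successor, update the dictionary if $v$ is the new head, split if overfull), and you have faithfully expanded exactly that outline into a cost accounting.

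Two minor remarks. First, your pointer to ``Theorem~\ref{thm:lossy-compression}'s query analysis'' for the decoding tables is misplaced: the $\Theta(n)$-bit tables and the $O(w^2/\log n)$ decompression bound come from Section~\ref{sec:fast}, not from the storage theorem. Second, your closing paragraph about neighbouring vertices' stored heights possibly changing when $v$ is inserted identifies a genuine subtlety that the paper simply does not address --- the paper's description re-encodes only the successor and is silent on whether nearby heights might go stale. Your proposed patch (touch $O(1)$ neighbouring encodings) is plausible and stays within the budget, but be aware you are being more careful here than the paper itself, not reproducing an argument it actually makes.
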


\begin{figure}
\begin{algorithmic}[1]
\item[\call{Read}(file)]
\STATE open the file
\STATE let $n$ be the number of points in the file
\STATE let $P$ to the compressed set of $n$ points all at the origin with height $w$
\STATE let $B$ to be a length $n$ vector of $\log \gamma$-bit integers with initial value 0
\FOR{$i = w-1$ down to $0$}
  \STATE reopen the file
  \STATE $P' \gets \emptyset$
  \FOR{point $j = 1$ to $n$}
    \IF{$B[j]$ is $\gamma$}
      \STATE copy the $j$th point of $P$ to $P'$
    \ELSE
      \STATE Read $p_j$ from the file.
      \STATE $p'_j \gets$ zero out the $i$ least significant bits of $p_j$
      \STATE add $p'_j$ with height $w$ to $P'$
    \ENDIF
  \ENDFOR
  \FOR{point $j = 1$ to $n$}
    \STATE Read the point $p'_j$ and height $h'_j$ from $P'$
    \STATE Check if one of the neighbours of the square $(p'_j, h'_j)$ contains a point
    \STATE If not, increment $B[j]$ to maximum value $\gamma$.
  \ENDFOR
  $P \gets P'$
\ENDFOR
\STATE{\textbf{return} $P$}
\end{algorithmic}
\caption{\label{fig:read-algo}  Initializing the compressed point set in $w$ scans of an input file.}
\end{figure}

This assumes the input was already rounded (or that we are not rounding).
Efficiently rounding is a difficult problem: we need to know the balanced
quadtree cell that contains a point $p$ to know how much information to
discard.  But we can't know that until we have read the entire file into
memory, which takes too much space.  The solution to this problem is to scan
the input file repeatedly.  In the first scan, we read only the first (most
significant) bit of each coordinate.  Having read the file once, if we now have
enough information to compute the balanced quadtree leaf of any point, we no
longer need to read any more bits of that point, but we still need to read more
bits of the other points.  Therefore, we mark the points either as balanced or
not using a bitvector $B$.  In each subsequent scan, we simply copy the points
that we know sufficiently well, while we read an additional bit of the points
we still need to refine.  In the algorithm of Figure~\ref{fig:read-algo}, $B$
is widened to count up to $\gamma$ to allow for less aggressive rounding.
Obviously, in practice, one would want to read multiple bits at a time, to
reduce the number of scans.

\begin{theorem}
\label{thm:dynamic}
We can construct a compressed query structure in $w$ scans of the input file,
in time $O(nw (\log(n) + w^2/\log n))$, never using more than $O(n)$ bits of
memory.  For $w = \Theta(\log n)$ this is $O(n \log^2 n)$ time.
\end{theorem}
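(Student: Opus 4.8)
The plan is to analyze the algorithm of Figure~\ref{fig:read-algo} line by line, establishing three invariants: that memory never exceeds $O(n)$ bits, that after the $i$th iteration of the outer loop the set $P$ correctly records all vertices to within precision $2^i$ together with a correct balanced-quadtree-leaf test recorded in $B$, and that the total work per outer iteration is $O(n(\log n + w^2/\log n))$. The outer loop runs $w$ times, which accounts for the ``$w$ scans'' claim and for the leading factor of $w$ in the running time.

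First I would handle the space bound. At all times we hold: the compressed structure $P$, which by Theorem~\ref{thm:lossy-compression} (with the broken-into-subarrays refinement of Section~\ref{sec:fast}) takes $O(n + w) = O(n)$ bits once $w \in O(n)$; a second copy $P'$ under construction, also $O(n)$ bits; the bitvector $B$ of $n$ entries each $O(\log\gamma) = O(1)$ bits; and $O(1)$ words of scratch per point being processed. So the working set is $O(n)$ bits throughout, as required. Here I would note that we never materialize the uncompressed input in memory — each point $p_j$ is read from the file on demand in the inner loop and discarded after its rounded prefix $p'_j$ is appended to $P'$.

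Next, correctness. The key observation, already flagged in the rounding discussion before Lemma~\ref{lem:rounding-error}, is that the balanced quadtree over the rounded set equals the balanced quadtree over the original set, and more importantly that a point's balanced-quadtree leaf is determined by a prefix of its coordinates of length $d(w-\log h)$. So after the algorithm has read the top $w-i$ bits of a coordinate, it can test the crowding condition for any candidate square of height $\ge 2^i$: the inner ``check neighbours'' loop does exactly this using \call{Vertices}/\call{Neighbour} queries on the partially-refined structure $P'$. Once some height $h_j$ passes the uncrowded-and-balanced test, $B[j]$ is driven to $\gamma$ and the point is frozen — any further bits would be rounded off anyway by the rounding rule ``zero the lower $h-\gamma$ bits.'' I would argue inductively that the $B[j]=\gamma$ marking happens at exactly the iteration $i$ with $2^i$ equal to the leaf height (up to the $\gamma$ slack), so the final $P$ is precisely the rounded set $P'$ of Theorem~\ref{thm:lossy-compression} with coefficient $\gamma$; the two earlier theorems then give the $O(n)$ bit bound on the output.

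Finally, the timing. Per outer iteration: one file scan is $O(n)$ I/O plus, for each point, either an $O(1)$-ish copy from $P$ to $P'$ or a read-and-mask; but appending to the compressed $P'$ and, in particular, re-encoding across subarray boundaries costs $O(w^2/\log n)$ per point in the worst case (a full subarray of $\Theta(w)$ points, each $O(w)$ bits, decompressed $\Theta(\log n)$ bits at a time using the $\Theta(n)$-bit tables from Section~\ref{sec:fast}), and locating the right subarray in the balanced tree costs $O(\log n)$ per point. The crowding tests in the second inner loop are a constant number of \call{Vertices}/\call{Neighbour} queries, each $O(w^2/\log n + \log n)$ by Figure~\ref{fig:compress-runtimes}, i.e. the same bound. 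So one outer iteration is $O(n(\log n + w^2/\log n))$, and multiplying by the $w$ iterations gives $O(nw(\log n + w^2/\log n))$. Substituting $w = \Theta(\log n)$ makes the bracket $O(\log n + \log n) = O(\log n)$ and the prefactor $O(n\log n)$, for $O(n\log^2 n)$ total.

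I expect the main obstacle to be the correctness argument — specifically, making rigorous the claim that after reading only a prefix of the coordinates one can already certify a point's balanced-quadtree leaf, and that this certification is monotone (once uncrowded-and-balanced at some height, a point stays settled as more bits of \emph{other} points arrive). The subtlety is that adding precision to neighbouring points can only \emph{split} squares further, never merge them, so an uncrowded square stays uncrowded; but the \emph{balance} condition could in principle be disturbed by a neighbour's square shrinking. I would need to check that the algorithm's choice of always using height exactly $2^i$ in iteration $i$ (rather than a point-dependent height) handles this, perhaps at the cost of reading a constant number of extra bits, which does not affect the asymptotics. The space and timing parts are essentially bookkeeping over the already-established Theorems~\ref{thm:lossy-compression} and the query-time table of Figure~\ref{fig:compress-runtimes}.
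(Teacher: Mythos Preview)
Your proposal is correct and follows essentially the same approach as the paper: the paper's own proof is a two-sentence sketch that simply notes each of the $w$ scans performs $O(1)$ \call{Vertices} queries per vertex at cost $O(w^2/\log n + \log n)$ each, with encoding into $P'$ at the same cost. Your treatment is considerably more thorough---the space argument, the correctness invariant, and the monotonicity concern you flag about the balance condition are all points the paper's proof glosses over entirely.
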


\begin{proof}
Each scan takes linear time to read the file, then performs $O(1)$
\call{Vertices} queries per vertex read, each of which has the ugly runtime
presented in the prior section.  Encoding, to add the new point to $P'$, has
equal runtime.
\end{proof}

Given the spatial locality between the queries, it seems likely that one could
improve upon this using finger searching data structures and caching the
decompression.  An alternate improvement would be to amortize the $w^2$-bit
worst-case subarray decompression time by the fact that we only have a total of
$O(n)$ bits.  I conjecture it should be possible to use only $O(nw)$ time over
$w$ scans without violating the space bound.

\section{Compressed Mesh Refinement}
\label{sec:refine}

The previous sections assumed the mesh was given as a well-spaced set of
points.  But where did this data come from, if it doesn't fit in memory?
Here I show how to construct a well-spaced superset (i.e., a quality mesh)
of an ill-spaced input by adding a minimal number of Steiner
points~\cite{bern94provably, ruppert95delaunay}.
I show that the Steiner points are asymptotically free to store, and
that we can compute them using just the queries for our compressed quadtree.

\begin{figure}[t]
\begin{algorithmic}[1]
\item[\call{Refine}($P$, $\rho$)]
\STATE $r \gets 0$
\STATE $M \gets P$
\WHILE{$r$ finite}
  \STATE $r' \gets \infty$
  \FOR{$v \in M$}
    \STATE $s_v \gets \call{SquareOf}(v)$
    \IF{$|s_v| > r$}
      \STATE $r' \gets \min(r', |s_v|)$
    \ELSE
      \STATE compute $2\rho$-clipped Voronoi of $v$
      \WHILE{aspect ratio of $v$ exceeds $\rho$}
        \STATE $r' \gets r$
        \STATE choose $x \in$ clipped Voronoi with $|vx| \geq \rho \NN(v)$
        \STATE round $x$ as for geometry compression
        \STATE insert $x$ into $M$
        \STATE recompute $2\rho$-clipped Voronoi of $v$
      \ENDWHILE
    \ENDIF
  \ENDFOR
  \STATE $r \gets r'$
\ENDWHILE
\end{algorithmic}
\caption{\label{fig:refine} Mesh refinement using a compressed mesh.}
\end{figure}

At heart, the algorithm I propose iterates over each vertex $v$, inspects
its Voronoi cell, then, if it has bad aspect ratio, inserts Steiner points
in appropriate locations.  It terminates when all vertices have good aspect
ratio.  The algorithm chooses Steiner points that are far from any current
vertex: they are far from $v$, but within its Voronoi cell, so they are far
from all other vertices also.  It then rounds the new vertex, which means
that the new vertex has nearest neighbour distance at least $\rho'\NN(v)$,
where $\rho'$ depends on the rounding parameter $\gamma$ and on the dimension
$d$.  This guarantees that after each pass, the smallest nearest neighbour
distance of any bad aspect ratio vertex increases geometrically; it also
guarantees size optimality, using a proof by Ruppert~\cite{ruppert95delaunay}.
Thus we terminate with a minimal-size well-spaced superset in $O(\log_{\rho'}
n)$ rounds.  The caller must set $\rho$ and $\gamma$ as follows to
guarantee termination and size optimality.

\begin{lemma}
Given a point $x$ chosen to be at distance $\NN(x) = \rho \NN(v)$ from its
nearest neighbour $v$, then after rounding $x$ to $x'$ as if $x$ had quadtree
height equal to $v$, then $\NN(x') \geq (\rho - 2^{-\gamma}) \NN(v)$.  Setting
$\rho \geq 1 + 2^{-\gamma}$ guarantees termination.
\end{lemma}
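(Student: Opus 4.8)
The plan is to establish the distance bound by a direct geometric estimate on how far rounding can move $x$, and then to verify that the claimed choice of $\rho$ forces the nearest-neighbour distance to strictly increase each round. First I would bound $|xx'|$. The point $x$ is rounded as if its quadtree height were equal to that of $v$; call this height $h$, so the quadtree leaf $s_v$ has size $s = 2^h$. Rounding zeros out the low $h - \gamma$ bits of each coordinate, so each coordinate moves by less than $2^{h-\gamma} = 2^{-\gamma} s$, and hence $|xx'| \le 2^{-\gamma} s \sqrt{d}$ (the same estimate used in Lemma~\ref{lem:rounding-error}, with the $\sqrt d$ absorbed into the $\gamma$ the caller picks, or kept explicit if one prefers). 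The key point is that $s \le \NN(v)$: since $v$ lives in the leaf $s_v$ and the leaf is uncrowded, the definition of crowding gives that the leaf size is at most the nearest-neighbour distance of $v$. Therefore $|xx'| \le 2^{-\gamma} \NN(v)$ (modulo the $\sqrt d$ factor, which I will suppress as the paper does in this lemma statement).

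Next I would turn the movement bound into a nearest-neighbour bound. Let $u$ be any vertex of $M$ other than $x'$. By the triangle inequality $|x'u| \ge |xu| - |xx'| \ge \NN(x) - 2^{-\gamma}\NN(v)$, and by hypothesis $\NN(x) = \rho\,\NN(v)$, so $|x'u| \ge (\rho - 2^{-\gamma})\NN(v)$. In particular the inequality holds for the $u$ realizing $\NN(x')$, and it also must be checked that $x'$ has not collided with $v$ itself: the same computation with $u = v$ gives $|x'v| \ge (\rho - 2^{-\gamma})\NN(v) > 0$ provided $\rho > 2^{-\gamma}$, so $x' \neq v$. Taking the minimum over all $u$ yields $\NN(x') \ge (\rho - 2^{-\gamma})\NN(v)$, which is the first claim.

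For the termination claim I would argue as sketched in the surrounding text: in a pass of \call{Refine}, every Steiner point $x$ inserted on behalf of a bad vertex $v$ is chosen with $|vx| \ge \rho\,\NN(v)$ inside the clipped Voronoi cell of $v$, so $\NN(x) \ge \rho\,\NN(v)$ (being inside $V_v$ means $x$ is at least as far from every other vertex as it is from $v$); applying the first part gives $\NN(x') \ge (\rho - 2^{-\gamma})\NN(v)$. If $\rho \ge 1 + 2^{-\gamma}$ then $\rho - 2^{-\gamma} \ge 1$, so $\NN(x') \ge \NN(v)$, i.e.\ each newly inserted vertex has nearest-neighbour distance no smaller than that of the vertex that triggered its insertion. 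Consequently the minimum nearest-neighbour distance taken over vertices that still have bad aspect ratio is non-decreasing from pass to pass, and in fact increases by a factor of at least $\rho' = \rho - 2^{-\gamma} \ge 1$ once a vertex has been fully fixed (a standard Ruppert-style accounting); with all distances bounded in $[0,W]$ this bounds the number of rounds by $O(\log_{\rho'} n)$ and guarantees termination. The main obstacle is the last step: being careful that ``bad vertex $v$ gets a Steiner point with $\NN(x') \ge \NN(v)$'' actually yields a strictly geometric increase in the relevant quantity rather than mere monotonicity; this is exactly where one must invoke the Ruppert argument alluded to in the text (that a vertex, once it has good aspect ratio, stays good and that encroachment chains are bounded), and I would cite \cite{ruppert95delaunay} for that portion rather than reprove it here.
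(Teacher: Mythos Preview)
Your proof follows essentially the same approach as the paper's: bound $|xx'|$ by $2^{-\gamma}$ times the quadtree leaf size, relate that size to $\NN(v)$ via the crowding condition, apply the triangle inequality to get $\NN(x') \ge \rho\,\NN(v) - |xx'|$, and then invoke Ruppert for the termination criterion $\rho - 2^{-\gamma} \ge 1$. The paper's version is terser (it simply asserts the leaf diameter is at most $\NN(v)$ and suppresses the $\sqrt{d}$ factor without comment), but the content and structure are the same.
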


\begin{proof}
The diameter of the quadtree square that contains $v$ is at most $\NN(v)$.
Keeping an additional $\gamma$ bits means the distance $|xx'|$ is at most
$2^{-\gamma} \NN(v)$.  Thus $\NN(x') \geq \rho \NN(v) - |xx'| \geq (\rho -
2^{-\gamma}) \NN(v)$.  The proof of Ruppert~\cite{ruppert95delaunay},
generalized to arbitrary dimensions and specialized to point clouds, states
that we need $\rho - 2^{-\gamma} \geq 1$.
\end{proof}

For efficient operation, I use the \emph{$\beta$-clipped Voronoi} cell
definition of Hudson and \turkoglu{}~\cite{hudson08efficient}.  For a
vertex $v$, the $\beta$-clipped Voronoi cell is the part of the Voronoi
cell that it within distance $\beta \NN(v)$.  We used this result earlier,
but in a well-spaced mesh, the clipped Voronoi cell is precisely the
Voronoi cell.  Theorem 7 of Hudson and \turkoglu{}~\cite{hudson08efficient}
gives conditions under which we can efficiently compute the $\beta$-clipped
Voronoi cell even if not all of the mesh is well-spaced; fundamentally, it
is that all vertices of substantially lesser nearest neighbour distance are
well-shaped.  This motivates the check in the code of Figure~\ref{fig:refine}
to avoid processing vertices whose quadtree square is large.

\begin{theorem}
On a $w$-bit machine, we can compute a minimal-sized well-spaced
superset of an $n$-point input after scanning the input once.  This uses
$O(nw)$ bits of storage.  We can also round the $n$-point input if we
scan the input $w$ times and reduce the storage to $O(m)$
bits, where $m$ is the number of points in the minimal well-spaced superset.
In either case, the procedure takes $O(nw^2 \polylog n)$ time.
\end{theorem}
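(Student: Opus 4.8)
The plan is to assemble \call{Refine} (Figure~\ref{fig:refine}) from the pieces already in hand: the clipped-Voronoi computation of Hudson and \turkoglu{} (their Theorem~7), the termination/size-optimality lemma stated just above, the lossy storage bound of Theorem~\ref{thm:lossy-compression}, and the multi-scan reader of Theorem~\ref{thm:dynamic}. The two sentences of the statement are really two instantiations of the same algorithm differing only in how the input is stored.

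For the single-scan version I would read the $n$ input points into an uncompressed Morton-sorted array, costing $O(nw)$ bits, and run \call{Refine} on it, inserting each Steiner point into the semi-dynamic compressed structure of Section~\ref{sec:dynamic}. Each Steiner point $x$ is rounded on insertion, so by the lemma above $\NN(x') \geq \rho' \NN(v)$ for the nearby vertex $v$; hence the Steiner points together with the (rounded) input form a $\rho$-well-spaced set, which Theorem~\ref{thm:lossy-compression} stores in $O(w + m\gamma) = O(m)$ bits once $\gamma$ is constant, and Ruppert's size-optimality argument (the cited lemma) gives $m = O(n)$, so the Steiner points are asymptotically free and the total is $O(nw)$ bits. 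For the time I would charge $O(\log_{\rho'} n)$ outer rounds, each touching the at most $O(n)$ vertices then present with $O(1)$ \call{SquareOf} and clipped-\call{Voronoi} queries plus $O(1)$ insertions apiece; feeding in the $O(w^2/\log n + \log n)$ per-operation cost from Figure~\ref{fig:compress-runtimes} and Section~\ref{sec:dynamic} yields $O(nw^2\polylog n)$. For the $w$-scan version the only change is that we cannot afford the $O(nw)$-bit uncompressed array, so I would replace the single read by the $w$-scan reader of Theorem~\ref{thm:dynamic}, refining as the bits arrive: a point's bitvector entry is frozen (no further bits read) once its neighbourhood in the \emph{current refined mesh} $M$ — not the raw input — is settled, exactly as in Figure~\ref{fig:read-algo} but with the neighbour test run against $M$. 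At all times $M$ lives in the compressed structure with $O(m)$ leaves, so memory stays within $O(m)$ bits; after $w$ scans the refinement is complete and every input point has been rounded to the precision dictated by its leaf in the final balanced quadtree. The time is $w$ linear passes plus $O(m)$ refinement operations per pass at the Section~\ref{sec:fast} cost, again $O(nw^2\polylog n)$.

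The main obstacle is exactly this interleaving. Rounding an input point requires the height of its leaf in the balanced quadtree of the \emph{output} mesh, which is not known until refinement finishes; and one must not instead round against the balanced quadtree of the raw input, since for an ill-spaced input that tree can have a superlinear number of cells and break the space budget. The fix I would pursue is to show that a point's final leaf height is determined by a bounded-radius neighbourhood of input and Steiner points, so the freeze predicate on the bitvector can be evaluated locally and, crucially, once true stays true as further Steiner points are added — this monotonicity is what makes $w$ scans suffice and keeps memory at $O(m)$ bits throughout. A secondary technical point is that \call{Refine} queries clipped Voronoi cells in a mesh that is only partially well-spaced; here I would invoke the hypothesis of Hudson and \turkoglu{}'s Theorem~7 (all vertices of substantially smaller nearest-neighbour distance are well-shaped), which the $|s_v| > r$ guard in Figure~\ref{fig:refine} enforces round by round, and combine it with the termination lemma to conclude correctness and the $O(\log_{\rho'} n)$ round bound.
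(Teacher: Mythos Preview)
Your single-scan analysis rests on the claim that Ruppert's size-optimality gives $m = O(n)$; it does not. Size-optimality says the output is within a constant factor of the \emph{smallest} well-spaced superset, which for an ill-spaced $n$-point input can itself have $\Theta(n\log\Delta)$ points. On $w$-bit integers $\log\Delta = O(w)$, so the paper accounts for up to $\Theta(nw)$ Steiner points, each averaging $O(1)$ compressed bits, to reach the $O(nw)$ total. Your bookkeeping happens to land on the right bound, but the reasoning is wrong and would break any time estimate that assumes only $O(n)$ vertices are ever present.

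For the $w$-scan case you have misidentified the obstacle. You say one must not round against the balanced quadtree of the raw input because that tree can be superlinear in $n$ and blow the budget; but the budget here is $O(m)$, not $O(n)$, and the meshing folklore you need (Bern--Eppstein--Gilbert's quadtree mesh is within a constant of Ruppert's optimal mesh) says precisely that the balanced quadtree over the \emph{input} already has $\Theta(m)$ cells. This is the paper's key step: apply Theorem~\ref{thm:dynamic} as written---rounding each input point against the input's own quadtree---spend $O(m)$ bits to do so, and \emph{then} run \call{Refine}. No interleaving of reading and refining is needed, and the monotonicity property you propose to prove (that a point's freeze predicate stays true as Steiner points arrive) is neither required nor obviously true. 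Drop the interleaving and invoke the $\Theta(m)$ quadtree-size equivalence instead.
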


\begin{proof}
In the first case, the storage consists of the $n$ input points written
in longhand and thus taking $O(nw)$ bits, then the $m$ Steiner points written
in compressed form.  In the worst case, $m \in n \log \Delta$, where $\Delta$
is the spread of the input and on integer input is polynomial in the word size.
Thus the $m$ Steiner points, which take $O(1)$ bits on average, only increase
the size by $O(nw)$.

In the second case, we
use Theorem~\ref{thm:dynamic} to read the input.  The theorem does not
quite apply: geometry compression uses a number of bits linear in the
size of the balanced quadtree, and an
ill-shaped input can define a large quadtree.  But this large quadtree has
size $\Theta(m)$ according to long-standing proofs in the meshing literature
(namely, that the Bern-Eppstein-Gilbert mesh, based on a quadtree, has size
within a constant factor of optimal, as does the Ruppert mesh that we are
computing here).  Therefore, after apply Theorem~\ref{thm:dynamic}, we have
used $O(m)$ bits.  Adding the $m$ rounded Steiner points does not
asymptotically increase the requirement.
\end{proof}

\section{Conclusions}

In the theoretical setting, I see two main directions for extending this work
in the near term.  The first is to consider the case of the weighted Delaunay
triangulation, which is needed for surface reconstruction in higher dimensions
and also for the mesh refinement problem.  I conjecture that all the same
bounds will hold.  There is a caveat: we need to represent the weight.  I
suspect a bounded number of bits suffice for this.  

The second direction reminds the reader that I aimed this work to problems in
the scientific computing setting.  Then the mesh should be holding values (say,
the pressure and velocity).  Can we compress those values?  Typically it is
safe to assume that the values come from a Lipschitz function---that is, that
the derivatives are bounded---so this is a somewhat reasonable hope.

The big remaining question, of course, has to do with practicality: is the
space improvement sufficient in practice to justify the runtime cost?  An early
implementation sees storage costs of 14 bits per vertex (bpv) for the Stanford
bunny, or 18 bpv for the Stanford Lucy dataset, as opposed to 96 bpv when
stored uncompressed in single-precision.  Storing an additional five bits
per vertex bounds the maximum relative error in inter-point distances to
about 10\%, while still offering a factor of three space savings.
In prior succinct data structures work, such a savings was
enough that memory hierarchy effects paid for the runtime cost of
decompressing.  Here we need to pay some additional asymptotic runtime cost as
well; to answer this question we will need to see a full implementation of
an end-to-end system that reads in a file, compresses it to memory, then meshes
it or reconstructs a surface from it.  It may well be that compression of the
style explained here would be well-suited to the distributed setting, where
network transfer costs frequently dominate.

\bibliographystyle{alpha}
\bibliography{main}

\end{document}